\newtheorem{lemma}{Lemma}
\newtheorem{theorem}{Theorem}
\theoremstyle{definition}
\newtheorem{definition}{Definition}
\def\polylog{\operatorname{polylog}}
\newcommand{\ceil}[1]{\left\lceil{#1}\right\rceil}
\newcommand{\floor}[1]{\left\lfloor{#1}\right\rfloor}
\newcommand{\eps}{\varepsilon}
\newcommand{\calG}{\mathcal G}
\newcommand{\bbZ}{\mathbb Z}
\newcommand{\set}[1]{\left \{ #1 \right \}}
\newcommand{\uc}{\mbox{\rm uc}}
\newcommand{\Prp}[1]{\Pr\!\left[{#1} \right]}
\newcommand{\E}{\mathbf{E}}
\newcommand{\Ep}[1]{\E\!\left[{#1} \right]}
\newcommand*\samethanks[1][\value{footnote}]{\footnotemark[#1]}
\title{Sublinear Distance Labeling}
\author[1]{Stephen Alstrup\thanks{Research partly supported by the FNU
project AlgoDisc - Discrete Mathematics, Algorithms, and Data
Structures.}}
\author[1]{Søren Dahlgaard\thanks{Research partly supported by Mikkel Thorup's
        Advanced Grant from the Danish Council for Independent Research
        under the Sapere Aude research career programme.}}
\author[1]{Mathias Bæk Tejs Knudsen\samethanks[1]\samethanks[2]}
\author[2]{Ely Porat}
\affil[1]{University of Copenhagen\\\texttt{\{s.alstrup,soerend,knudsen\}@di.ku.dk}}
\affil[2]{Bar-Ilan University\\\texttt{porately@cs.biu.ac.il}}
\begin{document}

\maketitle
\begin{abstract}
A distance labeling scheme labels the $n$ nodes of a graph with binary strings
such that, given the labels of any two nodes, one can determine the distance in
the graph between the two nodes by looking only at the labels. A
$D$-preserving distance labeling scheme only returns precise distances between
pairs of nodes that are at distance at least $D$ from each other. In this paper
we consider distance labeling schemes for the classical case of unweighted
graphs with both \emph{directed} and \emph{undirected} edges.

We present a $O(\frac{n}{D}\log^2 D)$ bit $D$-preserving
distance labeling scheme, improving the previous bound by Bollobás et. al.
[SIAM J. Discrete Math. 2005]. We also give an almost matching lower bound of
$\Omega(\frac{n}{D})$.
With our $D$-preserving distance labeling scheme as a building block, we
additionally achieve the following results:

\textbf{1.} We present the first distance labeling scheme of size $o(n)$ for
sparse graphs
(and hence bounded degree graphs). This addresses an open problem by Gavoille
et. al. [J. Algo. 2004], hereby separating the complexity from distance
labeling in general graphs which require $\Omega(n)$ bits, Moon [Proc. of Glasgow Math. Association 1965].
\footnote{This result for sparse graphs was made available online in
    a preliminary version of this paper \cite{arxivVersionThisPaper}.
    The label size was subsequently slightly improved by an $O(\log \log n)$
factor by Gawrychowski et al. \cite{GawrychowskiKUEvenSimpler}.}

\textbf{2.} For approximate $r$-additive labeling schemes, that return distances
within an additive error of $r$
we show a scheme of size
    $O\left ( \frac{n}{r} \cdot\frac{\polylog (r\log n)}{\log n} \right )$ for
$r \ge 2$. This improves on the current best bound of $O\left(\frac{n}{r}\right)$
    by Alstrup et. al. [SODA 2016] for sub-polynomial $r$, and is a generalization of a result by Gawrychowski
et al. [arXiv preprint 2015] who showed this for $r=2$.
\end{abstract}

\newpage
\section{Introduction}

The concept of \emph{informative labeling schemes} dates back to Breuer and
Folkman \cite{Breuer66,BF67} and was formally introduced by Kannan et
al.~\cite{KNR92,muller}. A labeling scheme is a way to represent a graph in a
distributed setting by assigning a bit string (called a \emph{label}) to each
node of the graph. In a distance labeling scheme we assign labels to a
graph $G$ from a family $\mathcal{G}$ (i.e. all forests, bounded-degree graphs,
or planar graphs with $n$ nodes) such that, given \emph{only} the labels of
a pair of nodes, we can compute the distance between them without the need for
a centralized data structure. When designing a labeling scheme the main goal is
to minimize the \emph{maximum label size} over all nodes of all graphs $G$ in
the family $\mathcal{G}$. We call this the size of the labeling scheme. As a
secondary goal some papers consider the \emph{encoding} and \emph{decoding} time
of the labeling scheme in various computational models. In this paper we study
the classical case of \emph{unweighted} graphs.

\paragraph{Exact distances}
The problem of exact distance labeling in general graphs is a classic
problem that was studied thoroughly in the 1970/80's. Graham and Pollak
\cite{grahampollak} and Winkler \cite{winkler} showed that labels of size
$\ceil{(n-1) \cdot \log_2 3}$ suffice in this case. Combining~\cite{KNR92}
and~\cite{moon1965minimal} gives a lower bound of $\ceil{n/2}$ bits
(see also~\cite{Gavoille200485}). Recently, Alstrup et al.~\cite{distalstrup}
improved the label size to $\frac{\log_2 3}{2}n + O\!\left(\log^2 n\right)$ bits.

Distance labeling schemes have also been investigated for various families
of graphs, providing both upper and lower bounds. For trees, Peleg~\cite{peleg}
showed that labels of size $O(\log^2 n)$ suffice with a matching lower bound by
Gavoille et. al~\cite{Gavoille200485}.
Gavoille et. al~\cite{Gavoille200485} also showed a $\Omega(n^{1/3})$ lower
bound for planar graphs and $\Omega(\sqrt{n})$ bound for bounded degree (and
thus sparse) graphs and for weighted graphs this was recently improved by
Abboud and Dahlgaard~\cite{AbboudD16} to $\Omega(\sqrt{n}\log n)$ even for
bounded planar graphs. Gavoille et al.~\cite{Gavoille200485} provided an
$O(\sqrt{n}\log n)$ labeling scheme for planar graphs (even weighted), however
nothing better than the $O(n)$ scheme for general graphs is known for
bounded-degree graphs. It has remained a major open problem in the field of
labeling schemes whether a scheme of size $O(\sqrt{n})$ or even $o(n)$ exists
for bounded-degree graphs as stated in e.g.~\cite{Gavoille200485}.

Other families of graphs studied include distance-hereditary~\cite{GP03b},
bounded clique-width~\cite{CV03}, some non-positively curved
plane~\cite{CDV06}, as well as interval~\cite{GP08} and permutation
graphs~\cite{BG09}.

\paragraph{Approximate distances}
For some applications, the $\Omega(poly(n))$ requirement on the label size for several graph classes is prohibitive. Therefore a large
body of work is dedicated to labeling schemes for approximating distances in
various families of
graphs~\cite{AbrahamCG12,CDEHV08,ElkinFN15,Gavoille200485,GuptaKL03,gupta2005traveling,KL06,peleg,Talwar04,Thorup2004distance,ThZw05}.
Such labeling schemes often provide efficient implementations of other
data structures like distance oracles \cite{Thorup2004distance} and dynamic
graph algorithms \cite{AbrahamCG12}.

In~\cite{peleg} a labeling scheme of size $O(\log^2 n\cdot \kappa\cdot
n^{1/\kappa})$ was presented for approximating distances up to a factor\footnote{This does not break the Girth Conjecture, as the
labeling scheme may under-estimate the distance as well.} of
$\sqrt{8\kappa}$.
In~\cite{Thorup2004distance} a scheme of poly-logarithmic
size was given for planar graphs when distances need only be reported within a
factor of $(1+\eps)$. Labeling schemes of additive error have also been
investigated. For general graphs Alstrup et. al~\cite{distalstrup} gave a
scheme of size $O(n/r)$ for $r$-additive distance labeling with $r\ge 2$ and a
lower bound of $\Omega(\sqrt{n/r})$ was given by Gavoille et
al.~\cite{GKKPP01}.
For $r=1$ a lower bound of $\Omega(n)$ can be
established by observing that such a scheme can answer adjacency queries in
bipartite graphs, which require $n/4$ bits to label for adjacency.

\paragraph{Distance preserving}
An alternative to approximating all distances is to only report exact
distances above some certain threshold $D$. A labeling scheme, which reports
exact distances for nodes $u,v$ where $dist(u,v)\ge D$ is called a
\emph{$D$-preserving distance labeling scheme}\footnote{In this paper we adopt
the convention that the labeling scheme returns an upper-bound if
$dist(u,v) < D$.}. Bollobás et al.~\cite{BCE03} introduced this notion and
gave a labeling scheme of size $O(\frac{n}{D}\log^2 n)$ for both directed and
undirected graphs. They also provided an $\Omega(\frac{n}{D}\log D)$ lower
bound for directed graphs.

\subsection{Related work}\label{secondmatter}
A problem closely related to distance labeling is adjacency labeling. For
some classes such as general graphs the best-known lower bounds for
distance is actually that of adjacency. Adjacency labeling has been studied
for various classes of graphs. In~\cite{AlstrupKTZ14} the label size for
adjacency in general undirected graphs was improved from $n/2+O(\log
n)$~\cite{KNR92,moon1965minimal} to optimal size $n/2+ O(1)$, and
in~\cite{adjacencytrees2015} adjacency labeling for trees was improved from
$\log_2 n + O(\log^* n)$~\cite{alstruprauhe} to optimal size $\log_2 n + O(1)$.

Distance labeling schemes and related $2$-hop labeling are used in SIGMOD and
is central for some real-world applications~\cite{AIY13,DGSW14,JRXL12}.
Approximate distance labeling schemes have found applications in several fields
such as reachability and distance oracles \cite{Thorup2004distance} and
communication networks \cite{peleg}. An overview of distance labeling schemes
can be found in~\cite{distalstrup}.

\subsection{Our results}
We address open problems of \cite{distalstrup,BCE03,Gavoille200485}
improving the label sizes for \emph{exact distances in sparse graphs},
\emph{$r$-additive distance in general graphs}, and \emph{$D$-preserving
distance labeling}. We do this by showing a strong relationship between
$D$-preserving distance labeling and several other labeling problems using
$D$-preserving distance labels as a black box. Thus, by improving the result of
\cite{BCE03} we are able to obtain the first sublinear labeling schemes for
several problems studied at SODA over the past decades. Our results hold for
both directed and undirected graphs and are summarized below.

\paragraph{Sparse graphs:}
We present the first sublinear distance labeling scheme for sparse graphs
giving the following theorem:
\begin{theorem} \label{thm:sparse}
    Let $\mathcal{S}_n$ denote the family of unweighted
    graphs on $n$ nodes with at most $n^{1+o(1)}$ edges. Then there
    exists a distance labeling scheme for $\mathcal{S}_n$ with maximum label
    size $o(n)$.
\end{theorem}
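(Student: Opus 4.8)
The plan is to use the announced $O(\frac{n}{D}\log^2 D)$-bit $D$-preserving scheme as a black box for ``far'' pairs and to encode ``near'' distances directly, exploiting sparsity. Fix a threshold $D = D(n)\to\infty$ (growing very slowly) and a degree cutoff $\Delta = \Delta(n)$; call a vertex \emph{heavy} if it has degree at least $\Delta$ and \emph{light} otherwise (for digraphs, heavy means in-degree or out-degree at least $\Delta$). Since a graph in $\mathcal{S}_n$ has $m = n^{1+o(1)}$ edges, at most $2m/\Delta$ vertices are heavy; taking $\Delta = \lceil (m/n)\log^2 n\rceil = n^{o(1)}$, this is at most $n/\log n$. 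The label of a vertex $u$ stores three things: (i) its $D$-preserving label; (ii) the distance $dist(u,h)$ (and, for digraphs, also $dist(h,u)$) for every heavy $h$ with that distance less than $D$; and (iii) if $u$ is light, the distance $dist_{G[\mathrm{light}]}(u,z)$ (with the obvious in/out versions for digraphs) for every light vertex $z$ within distance $\lceil D/2\rceil$ of $u$ in the graph $G[\mathrm{light}]$ induced by the light vertices.

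Given the labels of $u$ and $v$, the decoder returns the minimum of three quantities: the $D$-preserving answer; $dist(u,h)+dist(h,v)$ minimized over heavy $h$ recorded in both labels (allowing $h\in\{u,v\}$); and $dist_{G[\mathrm{light}]}(u,z)+dist_{G[\mathrm{light}]}(z,v)$ minimized over $z$ recorded in both labels. Each of the three is an upper bound on $dist(u,v)$ (using the convention that the $D$-preserving scheme over-estimates below $D$), so correctness reduces to exhibiting a tight one. If $dist(u,v)\ge D$, the first is exact. If $dist(u,v)<D$, fix a shortest $u$--$v$ path $P$: if $P$ passes through a heavy vertex $h$, then $dist(u,h),dist(h,v)<D$, both are recorded, and they sum to $dist(u,v)$; otherwise every vertex of $P$ is light, so $P\subseteq G[\mathrm{light}]$, the endpoints $u,v$ are in particular light, and the midpoint $z$ of $P$ lies within $\lceil D/2\rceil$ of both $u$ and $v$ in $G[\mathrm{light}]$ and is recorded by both labels, with $dist_{G[\mathrm{light}]}(u,z)+dist_{G[\mathrm{light}]}(z,v)=dist(u,v)$.

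For the size bound: part (i) is $O(\frac{n}{D}\log^2 D)=o(n)$ for any $D\to\infty$; part (ii) is $O(\frac{m}{\Delta}\log n)=O(\frac{n}{\log n})=o(n)$; and since every light vertex has fewer than $\Delta$ neighbours, the ball in part (iii) has at most $\Delta^{O(D)}$ vertices, so part (iii) costs $O(\Delta^{O(D)}\log n)$ bits. The main point — and the only real obstacle — is that these can all be made $o(n)$ simultaneously: because $m=n^{1+o(1)}$ forces $\log\Delta = o(\log n)$, one can choose $D\to\infty$ slowly enough that $D\log\Delta = o(\log n)$, whence $\Delta^{O(D)}\log n = n^{o(1)} = o(n)$. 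The other ingredient that must be set up correctly is the structural dichotomy for near pairs — that a shortest path avoiding every heavy vertex automatically has light endpoints — which is exactly what lets the induced-subgraph ball encoding (iii) take over from the heavy-hub encoding (ii).
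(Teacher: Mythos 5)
Your proof is correct, and it takes a genuinely different route from the paper's. The paper reduces to the bounded-degree case by \emph{vertex splitting}: each high-degree vertex $u$ is replaced by a path of $\lceil \deg(u)/(k-2)\rceil$ copies joined by $0$-weight edges, the incident edges are distributed among the copies, and then the bounded-degree scheme (a $D$-preserving label plus explicit distances to the $\le \Delta^D = O(\sqrt n)$ nodes within distance $D$) is applied to the resulting $O(n)$-node graph $G'$. That route forces the paper to remark that the $D$-preserving scheme actually preserves ``$\ge D$ edges on a shortest path'' rather than ``distance $\ge D$,'' since $G'$ has $0/1$ weights. You instead stay on the original graph and split the near-pair case by a heavy/light dichotomy: heavy vertices are few ($O(m/\Delta)$) and serve as hubs for any near pair whose shortest path touches them; a near pair with an all-light shortest path is handled by balls of radius $\lceil D/2\rceil$ inside the light-induced subgraph, where degrees are bounded by $\Delta$. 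Your decoder's three-way minimum and the midpoint argument for the all-light case are both sound, the triangle-inequality upper bounds hold, and the size bookkeeping ($\Delta^{O(D)}\log n = n^{o(1)}$ after choosing $D\to\infty$ with $D\log\Delta = o(\log n)$) is correct. Your approach buys a cleaner, transformation-free argument that avoids the weighted-graph subtlety entirely; the paper's approach buys a sharper intermediate lemma for bounded-degree graphs and a more explicit quantitative bound $O\bigl(\frac{n}{D}\log^2 D\bigr)$ with $D = \frac{\log n}{1+\log\frac{m+n}{n}}$, whereas you only argue $o(n)$ (though your parameters could be tuned to recover the same quantitative form). Both are valid proofs of Theorem~\ref{thm:sparse}.
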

As noted, prior to this work the best-known bound for this family was the
$O(n)$ scheme of \cite{distalstrup} for general graphs.
Thus, Theorem~\ref{thm:sparse} separates the
family of sparse graphs from the family of general graphs requiring $\Omega(n)$
label size.
Our result uses a black-box reduction from sparse graphs to the $D$-preserving
distance scheme of Theorem~\ref{thm:d_dist} below. The result of
Theorem~\ref{thm:sparse} was made available online in a preliminary version of
this paper~\cite{arxivVersionThisPaper} and was subsequently slightly improved
by Gawrychowski et al. \cite{GawrychowskiKUEvenSimpler} by noting, that one of
the steps in the construction of our $D$-preserving distance scheme can be
skipped when only considering sparse graphs\footnote{
The scheme presented in this paper has labels of length
$O \left ( \frac{n \polylog \Delta}{\Delta} \right )$, where
$\Delta = \frac{\log n}{1+\log \frac{m+n}{n}}$. In
\cite{GawrychowskiKUEvenSimpler} they improve the exponent
        of the $\polylog \Delta$ term from $2$ to $1$.}.

\paragraph{Approximate labeling schemes:}
For $r$-additive distance labeling
Gawrychowski et al.~\cite{GawrychowskiKUEvenSimpler} showed that a sublinear
labeling scheme for sparse graphs implies a sublinear labeling scheme for $r=2$
in general graphs. We generalize this result to $r \ge 2$ by a reduction
to the $D$-preserving scheme. We note that a reduction to sparse graphs
does not suffice in this case, and the scheme of
\cite{GawrychowskiKUEvenSimpler} thus only works for $r=2$. More precisely, we
show the following:

\begin{theorem}\label{thm:r_approx}
    For any $r \ge 2$, there exists an approximate $r$-additive labeling
    schemes for the family $\calG_n$ of unweighted graphs on $n$ nodes with
    maximum label size
    \[
        O\!\left(\frac{n}{r}\cdot \frac{\polylog (r\log n)}{\log n} \right)\ .
    \]
\end{theorem}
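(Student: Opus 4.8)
The plan is to reduce $r$-additive distance labeling in a general graph $G$ to exact $D$-preserving distance labeling, with the parameter $D$ chosen in terms of $r$, so that we can plug in the scheme of Theorem~\ref{thm:d_dist}. The key idea is a standard clustering/contraction trick: pick a maximal set of ``centers'' that are pairwise at distance more than $r$ in $G$ (greedily, so every node is within distance $r/2$ of some center), and then work with the quotient-like graph $H$ on the centers, where the distance in $H$ between two centers approximates $\mathrm{dist}_G$ up to an additive term on the order of $r$. For a node $u$, its label will record: (i) a pointer to its nearest center $c(u)$ together with the exact small distance $\mathrm{dist}_G(u,c(u)) \le r/2$, and (ii) the $D$-preserving distance label of $c(u)$ inside $H$, for an appropriate $D = \Theta(r)$. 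On a query $(u,v)$ we return $\mathrm{dist}_G(u,c(u)) + \widehat{\mathrm{dist}_H}(c(u),c(v)) + \mathrm{dist}_G(v,c(v))$, where $\widehat{\cdot}$ is the value returned by the $D$-preserving scheme; when the $D$-preserving scheme only gives an upper bound (because the centers are close), the two nodes $u,v$ are themselves within $O(r)$ of each other and the returned upper bound is still within additive $O(r)$ of the truth. Rescaling $r$ by a constant factor absorbs the $O(r)$ slack into exactly $r$.

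The main quantitative step is the size bookkeeping. The number of centers is at most $n$, but more usefully the $D$-preserving label of a center in a graph on $N \le n$ nodes has size $O(\frac{N}{D}\log^2 D)$ by Theorem~\ref{thm:d_dist}; taking $D = \Theta(r)$ this is $O(\frac{n}{r}\log^2 r)$, which is already close to the target but is missing the crucial $\frac{1}{\log n}$ and has $\log^2 r$ rather than $\polylog(r\log n)$. To gain the extra $\frac{1}{\log n}$ factor I would not label $H$ directly; instead I would first reduce $H$ to a sparse graph and invoke the machinery already assembled for Theorem~\ref{thm:sparse}. Concretely, replace each center by the standard ``distance-$D$ ball'' gadget so that the resulting instance has $n^{1+o(1)}$ edges but whose relevant distances are preserved up to scaling; the sparse-graph reduction then yields labels of size $O\!\left(\frac{n}{r}\cdot\frac{\polylog(r\log n)}{\log n}\right)$, because the number of distinct ``distance profiles'' a node can have is bounded by the number of centers within distance $O(r)$, which after sparsification is $r^{O(1)}$, contributing only a $\polylog$ factor after taking logarithms, while the dependence on $n$ enters solely through a $\frac{1}{\log n}$ savings coming from encoding each center identifier implicitly via its position in a balanced decomposition rather than as a raw $\log n$-bit index. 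The pieces (i) a raw distance $\mathrm{dist}_G(u,c(u)) \le r/2$, which costs $O(\log r)$ bits, and (ii) the center identifier, which we fold into the sparse-graph label, are both lower-order.

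The hard part will be making step (ii) --- the savings of the $\frac{1}{\log n}$ factor --- rigorous rather than heuristic: one must verify that the reduction to sparse graphs from Theorem~\ref{thm:sparse} is \emph{distance-preserving up to the additive error we can afford}, not merely a black box for the existence statement, and that the $o(n)$ bound there can be opened up to the explicit $\frac{n}{r}\cdot\frac{\polylog(r\log n)}{\log n}$ form when the effective parameter is $D=\Theta(r)$ rather than the asymptotically-growing $\Delta$ used in the sparse case. A secondary subtlety is handling directed graphs: the greedy center selection must be done with respect to a symmetrized distance (or one keeps both in- and out-distances to centers), and one checks that the $D$-preserving scheme of Theorem~\ref{thm:d_dist}, which is stated for directed graphs, composes correctly through the clustering so that the triangle-inequality-style additive bound still holds in the directed setting. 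Finally I would verify the boundary case $r=2$ recovers (up to constants) the Gawrychowski et al.\ bound, as a sanity check on the constants hidden in $D=\Theta(r)$.
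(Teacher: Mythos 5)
Your high-level instinct --- reduce to the $D$-preserving scheme of Theorem~\ref{thm:d_dist} with $D$ chosen as a function of $r$ --- is right, but the parameter you pick and the mechanism you sketch for recovering the $\frac{1}{\log n}$ factor are not the ones that make the argument go through, and the sketch as written has a real gap.

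The paper does \emph{not} set $D = \Theta(r)$; it sets $D = \Theta\!\left(\frac{r\log n}{\log(r\log n)}\right)$. Plugging this directly into the $O(\frac{n}{D}\log^2 D)$ bound of Theorem~\ref{thm:d_dist} already yields $O\!\left(\frac{n}{r}\cdot\frac{\polylog(r\log n)}{\log n}\right)$ with no further sparsification of any auxiliary graph. The whole difficulty is then what to do for pairs at distance less than this (much larger) $D$, and the paper resolves that with two ingredients you do not have: (a) form $G^r$ by adding an edge between every pair at $G$-distance at most $r/2$, take the nodes of $G^r$-degree at least $t := r\log^{10} n$, and store a small dominating set $S$ of them (so any short-distance pair touching a high-$G^r$-degree node is answered via $S$ within additive $r$); and (b) for pairs where both endpoints have $G^r$-degree below $t$, observe that the ball of radius $D$ in the induced low-degree subgraph has at most $t^{2D/r} = O(\sqrt{n})$ nodes, because each $G^r$-step corresponds to at most $t$ choices and covers $r/2$ units of $G$-distance --- so exact distances can be stored outright. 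It is precisely the balance between $D \approx \frac{r\log n}{\log t}$ and $t^{2D/r} \approx \sqrt{n}$ that produces the $\frac{1}{\log n}$ gain, not a reduction to sparse graphs.

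Your proposed route --- cluster into centers, build a quotient graph $H$, then ``sparsify $H$ and invoke the machinery of Theorem~\ref{thm:sparse}'' --- is exactly what the paper flags as insufficient: the paper explicitly notes that ``a reduction to sparse graphs does not suffice in this case, and the scheme of \cite{GawrychowskiKUEvenSimpler} thus only works for $r=2$.'' The reason is that Lemma~\ref{lem:sparse} computes \emph{exact} distances and its built-in parameter is $D' = \frac{\log n}{1+\log\frac{m+n}{n}}$, which does not scale with $r$; you get no additional savings from being allowed an additive slack of $r$ once you hand the problem to a black-box sparse-graph scheme. The remaining paragraph of your sketch (``distance profiles,'' ``balanced decomposition,'' encoding identifiers implicitly) is not a concrete argument and does not correspond to a working step; you yourself flag it as the ``hard part,'' and indeed it is where the proof is missing. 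Finally, a smaller issue: a maximal set of centers at pairwise distance $> r$ guarantees every node is within distance $r$, not $r/2$, of a center --- this is just a constant, but worth getting right when you later want to absorb slack into ``exactly $r$.''
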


Theorem~\ref{thm:r_approx} improves on the previous best bound of $O\left(\frac{n}{r}\right)$
by \cite{distalstrup} whenever $r = 2^{o\left (\sqrt{\log n} \right )}$, e.g.
when $r = \polylog n$.

\paragraph{$D$-preserving labeling schemes:}
For $D$-preserving labeling schemes we show that:
\begin{theorem}\label{thm:d_dist}
    For any integer $D \in [1,n]$, there exists a $D$-preserving distance
    labeling scheme for the family $\calG_n$ of unweighted
    graphs on $n$ nodes with maximum label size
    \[
        O\!\left(\frac{n}{D}\max\set{\log^2 D,1}\right)\ .
    \]
\end{theorem}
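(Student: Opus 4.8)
The plan is to build the label of a node $v$ out of two pieces: a small ``skeleton'' of landmark nodes that $D$-approximately covers the graph, and local information telling $v$ how to reach nearby landmarks. First I would choose a set $R$ of \emph{reference} (landmark) nodes as follows. Partition the node set by running a BFS and placing a landmark every $\Theta(D)$ layers along shortest paths; more robustly, pick $R$ to be a hitting set so that every node is within distance $O(D)$ of some landmark, which can be done with $|R| = O(n/D)$ by a greedy/ball-covering argument (repeatedly take an uncovered node, mark the ball of radius $cD$ around it; these balls are disjoint-ish so at most $O(n/D)$ iterations). The key structural fact I want is that for any two nodes $u,v$ with $\dist(u,v) \ge D$, there is a shortest $u$–$v$ path passing ``close'' to landmarks, so that $\dist(u,v)$ is determined by (i) the distance from $u$ to a nearby landmark $x$, (ii) the distance from $v$ to a nearby landmark $y$, and (iii) the landmark-to-landmark distance $\dist(x,y)$, with additive error that is small relative to $D$ — and, crucially, since we only need to be exact when $\dist(u,v)\ge D$, we can afford to spend the label budget on making (iii) exactly recoverable.

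The label of $v$ then consists of: (a) the identity of one landmark $\ell(v)$ with $\dist(v,\ell(v)) = O(D)$ together with the exact value $\dist(v,\ell(v))$ and, more generally, the exact distances from $v$ to all landmarks lying within distance $O(D)$ of $v$ (there are only $O(1)$ such landmarks if the balls are spread out, or we bound their number by a volume argument and store $O(n/D^{?})$ of them — this needs care); (b) a shared component encoding, for the chosen landmark $\ell(v)$, enough of the landmark-to-landmark distance matrix to recover $\dist(\ell(v), \ell(u))$ for every other landmark $\ell(u)$. For part (b), the landmarks form a metric on $|R| = O(n/D)$ points with integer distances in $[1,n]$, and we need each node to carry the row of this matrix indexed by its own landmark; naively that is $O((n/D)\log n)$ bits, which is too much, so the heart of the argument is compressing it. Here I would apply a recursive/hierarchical distance-labeling idea on the landmark set itself: treat $R$ as a new graph (with edge weights given by original distances, or re-run the construction on the subgraph of landmarks) and apply a $D'$-preserving scheme at a coarser scale $D' = \Theta(D)$ on $|R|$ nodes, or — following Bollobás et al.\ — observe that along any shortest path through landmarks consecutive landmark distances differ by $O(D)$, so the row can be stored as a starting value plus $O(n/D)$ small \emph{increments}, each $O(\log D)$ bits, giving $O((n/D)\log D)$; squeezing out the extra $\log D$ (to reach $\log^2 D$ only in the numerator, i.e.\ $O(\frac nD \log^2 D)$ total) comes from a second level of bucketing the increments.

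The main obstacle I anticipate is exactly this compression of the landmark distance information down to $O(\frac nD \log^2 D)$ bits per node while still allowing \emph{exact} recovery of $\dist(u,v)$ whenever $\dist(u,v) \ge D$: the approximate triangle-inequality sandwich only gives $\dist(u,v)$ up to $\pm O(D)$, so I must additionally show that the true distance is the \emph{unique} value consistent with the stored exact local distances to \emph{all} nearby landmarks of $u$ and of $v$ — i.e.\ that taking the minimum over landmark pairs $(x,y)$ with $x$ near $u$, $y$ near $v$ of $\dist(u,x)+\dist(x,y)+\dist(y,v)$ is attained and equals $\dist(u,v)$. Proving that requires choosing the landmark set so that \emph{every} shortest path of length $\ge D$ has a landmark within a small fraction of $D$ of one of its endpoints in a way that lies \emph{on} (or within the error of) the geodesic, and handling the directed case (where ``near'' must be interpreted with orientation, and one needs both in-landmarks and out-landmarks). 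I would handle the $D = 1$ and small-$D$ corner via the $\max\{\log^2 D,1\}$ term (for $D=1$ this is just the $\Theta(n)$-bit exact scheme of \cite{distalstrup}), and verify the directed case carries through by storing, for each node, both its closest out-reachable landmark and its closest in-reachable landmark together with the corresponding exact distances.
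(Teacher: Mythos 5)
Your proposal goes down a different road from the paper, and it has a genuine gap at its center that it identifies but does not close.

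The paper does \emph{not} use a ball-covering landmark set. It samples a random multiset $R$ of size roughly $\frac{n}{D}\log D$ so that, for most pairs $u,v$ with $dist(u,v)\ge D$, some sampled node lies \emph{on} a shortest $u$--$v$ path; then $\min_{w\in R}\bigl(dist(u,w)+dist(w,v)\bigr)$ is \emph{exactly} $dist(u,v)$ by the two sides of the triangle inequality. The exceptional pairs are absorbed by a sick/healthy dichotomy: a node with more than $n/D$ ``uncovered'' partners is declared sick, the sick set $S$ has expected size $O(n/D)$ (Markov twice), everyone stores distances to $R\cup S$, and healthy nodes additionally store their at most $n/D$ uncovered partners directly. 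The $\log^2 D$ (rather than $\log^2 n$) comes from first building a $[D,2D]$-preserving scheme where all stored distances are capped at $2D$ and hence fit in $O(\log D)$ bits, and then bootstrapping over scales $D, 2D, 4D, \dots$ with a geometric-series bound.

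Your landmark set $R$ (every node within $O(D)$ of a landmark) only guarantees landmarks are \emph{near} nodes, not \emph{on} shortest paths. Consequently $\min_{x\text{ near }u,\,y\text{ near }v}\bigl(dist(u,x)+dist(x,y)+dist(y,v)\bigr)$ is an upper bound on $dist(u,v)$ that need not be tight even when $dist(u,v)\ge D$: the detour via $x$ and $y$ can genuinely be longer, and there is no structural reason for the minimum to be attained at the true value. You notice this (``the approximate triangle-inequality sandwich only gives $dist(u,v)$ up to $\pm O(D)$'') and assert you ``must additionally show'' uniqueness, but the ball-covering framework does not support such a claim, and no argument is given. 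This is the crux of the problem, and patching it is essentially equivalent to replacing ball-covering with path-hitting --- which is what the paper's random sampling does. Separately, your compression of the landmark-row to $O(\frac{n}{D}\log^2 D)$ bits is also only sketched (``a second level of bucketing the increments''), while the paper achieves it cleanly via the $[D,2D]$ cap plus bootstrapping; your increment idea might be salvageable for a weaker bound, but as stated it does not deliver the theorem.
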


Theorem~\ref{thm:d_dist} improves the result of \cite{BCE03} by a factor of
$O(\log^2 n/\log^2 D)$ giving the first sublinear size labels for this
problem for any $D = \omega(1)$. This sublinearity is the main ingredient in
showing the results of Theorems~\ref{thm:sparse} and~\ref{thm:r_approx}. Our
scheme uses sampling similar to that of \cite{BCE03}. By sampling
fewer nodes we show that not ``too many'' nodes end up being problematic and
handle these separately\footnote{We note that after making this result
available online in a preliminary version~\cite{arxivVersionThisPaper}, the
bound of Theorem~\ref{thm:d_dist} was slightly improved by Gawrychowski et
al.~\cite{GawrychowskiKUEvenSimpler} to $O(\frac{n}{D}\log D)$.}.

Finally, we show the following almost matching lower bound for undirected
graphs extending the construction of~\cite{BCE03} for directed graphs.

\begin{theorem} \label{thm:lowerapprox}
A $D$-preserving distance labeling scheme for the family $\calG_n$ of
unweighted and undirected
graphs on $n$ nodes require label size $\Omega(\frac{n}{D})$, when $D$ is an
integer in $[1,n-1]$.
\end{theorem}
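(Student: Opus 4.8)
The plan is to establish the $\Omega(n/D)$ lower bound by an incompressibility/counting argument: exhibit a large family $\mathcal{F}$ of undirected $n$-node graphs such that any two distinct members of $\mathcal{F}$ must receive different collections of labels under any $D$-preserving scheme, and then observe that if all labels have length $L$ then the scheme can encode at most $2^{O(nL)}$ distinct such collections, forcing $L = \Omega(\log|\mathcal{F}|/n)$. To make this yield $\Omega(n/D)$ we want $\log|\mathcal{F}| = \Omega(n^2/D)$, i.e. roughly $n/D$ "independent binary choices" per node.

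Concretely, I would build the graphs out of $\Theta(D)$-long paths so that distances in the $\ge D$ regime reveal the choices. Following and adapting the directed construction of Bollob\'as et al.~\cite{BCE03}: take a "spine" consisting of two groups of roughly $n/(cD)$ special vertices $a_1,\dots,a_k$ and $b_1,\dots,b_k$ (with $k=\Theta(n/D)$), each pair $a_i,b_j$ connected (or not) through a gadget of length $\Theta(D)$ in a way that, for the chosen threshold $D$, the distance $\mathrm{dist}(a_i,b_j)$ takes one of two well-separated values both $\ge D$ depending on a bit $x_{ij}\in\{0,1\}$; the remaining $n - \Theta(n)$ vertices are spent as internal path vertices realizing these lengths. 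This gives a family indexed by a $k\times k$ binary matrix $X$, so $|\mathcal{F}| = 2^{k^2} = 2^{\Omega(n^2/D^2)}$. The key point is that from the labels of $a_i$ and $b_j$ alone the scheme must output $\mathrm{dist}(a_i,b_j)$ exactly (since it is $\ge D$), hence recover $x_{ij}$; so the multiset of labels of the $a_i$'s together with the multiset of labels of the $b_j$'s determines $X$. Since there are only $k = \Theta(n/D)$ such labels on each side and each has length $L$, we get $2^{2kL} \ge 2^{k^2}$, i.e. $L \ge k/2 = \Omega(n/D)$.

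The main obstacle — and the place where the undirected case genuinely differs from the directed construction of~\cite{BCE03} — is designing the gadgets so that in an \emph{undirected} graph the $\binom{k}{2}$-or-$k^2$ pairwise distances can be set independently by single bits while keeping all of them $\ge D$ and keeping the total vertex count $O(n)$. In the directed setting one can route $a_i \to b_j$ along a dedicated one-way path without interference; undirected shortest paths can "shortcut" through other gadgets, so I would need to choose path lengths (e.g. using a base length $\approx D$ plus small per-index offsets, or attaching the gadgets as pendant structures so that any alternative route is strictly longer) to guarantee that $\mathrm{dist}(a_i,b_j)$ depends only on $x_{ij}$. A clean way to do this is to make each $a_i$ and each $b_j$ a leaf hanging off a short central path, with the $x_{ij}$ bit controlling a length-$\Theta(D)$ detour on a private path between the attachment points; one then checks that the unique shortest $a_i$–$b_j$ route uses that private path. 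Once the gadget is correct, I would verify (i) every relevant distance is $\ge D$ so the scheme is forced to be exact on it, (ii) the construction fits in $n$ vertices with $k=\Omega(n/D)$, and (iii) handle the boundary ranges of $D$ (the statement only asks $D\in[1,n-1]$, and for small constant $D$ the bound $\Omega(n/D)=\Omega(n)$ follows from the known adjacency/$1$-additive lower bounds, while for $D$ close to $n$ we just need $k\ge 1$). Assembling these pieces gives $L = \Omega(n/D)$, matching Theorem~\ref{thm:d_dist} up to the $\log^2 D$ factor.
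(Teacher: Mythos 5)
Your high-level strategy — build $2^{k^2}$ graphs with $k=\Theta(n/D)$ distinguished nodes on each side, observe that $2k$ labels of length $L$ must encode $k^2$ bits, conclude $L\ge k/2=\Omega(n/D)$ — is exactly the paper's argument, and you correctly flag that the whole difficulty is realizing such a family of \emph{undirected} $n$-node graphs. However, your proposed construction does not actually resolve that difficulty, and as written it fails on vertex count. You propose to encode each bit $x_{ij}$ by a ``gadget of length $\Theta(D)$'' or a ``length-$\Theta(D)$ detour on a private path between the attachment points'' of $a_i$ and $b_j$. That is one $\Theta(D)$-vertex gadget \emph{per pair} $(i,j)$, i.e.\ $\Theta(k^2 D)=\Theta(n^2/D)$ vertices — far more than $n$ once $k=\Theta(n/D)$ and $D=o(n)$. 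If you instead shrink $k$ to keep $k^2D=O(n)$ you only get $k=O(\sqrt{n/D})$ and hence $L=\Omega(\sqrt{n/D})$, which is the wrong bound. So the gadget sketch, as stated, does not close the gap you yourself identified.

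The missing idea is that the bit $x_{ij}$ should cost \emph{zero} extra vertices, and the $\Theta(D)$ overhead should be paid once per index $j$, not once per pair. The paper takes $L=\{a_1,\dots,a_k\}$ and $R=\{b_1,\dots,b_k\}$ with $k=\lfloor n/(D+1)\rfloor$, lets $x_{ij}$ be the presence or absence of the edge $a_i b_j$ in a bipartite graph on $L\cup R$, and attaches to each $b_j$ a single pendant path of $D$ nodes ending at $w_j$. Total vertex count is $k(D+1)\le n$. Then $\mathrm{dist}(a_i,w_j)=D$ if $a_i\sim b_j$, and $\ge D+2$ otherwise (any alternative route must pass through some other $b_{j'}$ and back), so every queried distance is $\ge D$ and a $D$-preserving scheme must answer it exactly, recovering $x_{ij}$ from the labels of $a_i$ and $w_j$. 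This single-pendant-path trick is precisely what lets you keep $k=\Theta(n/D)$ within budget and is the step your proposal lacks; without it, the counting argument cannot be instantiated at the required scale.
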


\section{Preliminaries}\label{sec:prelims}
Throughout the paper we adopt the convention that $\lg x = \max(\log_2 x, 1)$
and $\log x = \ln x$. When $x \le 0$ we define $\lg x = 1$.
In this paper we assume the word-RAM model, with word size $w = \Theta(\log
n)$. If $s$ is a bitstring we denote its length by $|s|$ and will also use $s$
to denote the integer value of $s$ when this is clear from context. We use
$s\circ s'$ to denote concatenation of bit strings. Finally, we use the
Elias $\gamma$ code~\cite{Elias75} to encode a bitstring $s$ of unknown length
using $2|s|$ bits such that we may concatenate several such bitstrings and
decode them again.

\paragraph{Labeling schemes}\label{sec:prelim_labeling}
A \emph{distance labeling scheme} for a family of graphs $\calG$ consists of an
encoder $e$ and a decoder $d$. Given a graph $G\in\calG$ the encoder computes a
\emph{label assignment} $e_G : V(G)\to \{0,1\}^*$, which assigns a \emph{label}
to each node of $G$. The decoder is a function such that given any graph
$G\in\calG$ and any pair of nodes $u,v\in V(G)$ we have $d(e_G(u), e_G(v)) =
dist_G(u,v)$. Note that the decoder is oblivious to the actual graph $G$ and is
only given the two labels $e_G(u)$ and $e_G(v)$.

The \emph{size} of a labeling scheme is defined as the maximum label size
$|e_G(u)|$ over all graphs $G\in\calG$ and all nodes $u\in V(G)$. If for all
graphs $G\in\calG$ the mapping $e_G$ is injective we say that the labeling
scheme assigns \emph{unique labels} (note that two different graphs $G,
G'\in\calG$ may share a label).

If the encoder and graph is clear from the context, we will sometimes denote the
label of a node $u$ by $\ell(u) = e_G(u)$.

Various computability requirements are sometimes imposed on labeling
schemes~\cite{AKM01,KNR92,siamcompKatzKKP04}.

\section{$D$-preserving distance labeling schemes}\label{sec:preserve}

In this section we will prove Theorem~\ref{thm:d_dist}. Observe first that for $D=1$
Theorem~\ref{thm:d_dist} is exactly the classic problem of distance labeling and we
may use the result of \cite{distalstrup}. We will therefore assume that
$D\ge 2$ for the remainder of this paper. Let us first formalize the definition
of a $D$-preserving distance labeling scheme.

\begin{definition}
    Let $D$ be a positive integer let $\calG$ be a family of graphs. For each
    graph $G\in\calG$ let $e_G : V(G)\to \{0,1\}^*$ be a mapping of nodes to
    labels. Let $d : \{0,1\}^*\times \{0,1\}^*\to \bbZ$ be a decoder. If $e$
    and $d$ satisfy the following two properties, we say that the pair $(e,d)$
    is a \emph{$D$-distance preserving labeling scheme} for the graph family
    $\calG$.
    \begin{enumerate}
        \item $d(e_G(u), e_G(v))\ge dist_G(u,v)$ for all $u,v\in G$ for any
            $G\in\calG$.
        \item $d(e_G(u), e_G(v)) = dist_G(u,v)$ for all $u,v\in G$ with
            $dist_G(u,v)\ge D$ for any $G\in\calG$.
    \end{enumerate}
\end{definition}

The idea of the labeling scheme presented in this section is to first make a
labeling scheme for distances in the range $[D,2D]$ and use this scheme for
increasingly bigger distances until
all distances of at least $D$ are covered. Loosely speaking, the scheme is
obtained by sampling a set of nodes $R$, such that \emph{most} shortest paths
of length at least $D$ contain a node from $R$. Then all nodes are partitioned into
\emph{sick} and \emph{healthy} nodes adding the sick nodes to the set $R$. All
nodes then store their distance to each node of $R$ and healthy nodes will
store the distance to all nodes, for which the shortest path is not
\emph{covered} by some node in $R$.

\subsection{A sample-based approach}\label{sec:old_d_dist_scheme}
As a warm-up, we first present the $O\!\left(\frac{n}{D}\log^2 n\right)$ scheme
of Bollob\'{a}s et al. in \cite{BCE03} with a slight modification.

Given a graph $G=(V,E)\in\calG$ we pick a random multiset $R\subseteq V$
consisting of $\ceil{c\cdot \frac{n}{D}\log n}$ nodes for a constant $c$ to be
decided. Each element of $R$ is picked uniformly and independently at random
from $V$ (i.e. the same node might be picked several times)\footnote{In
\cite{BCE03} they instead picked $R$ by including each node of $G$ with
probability $\frac{c\log n}{D}$.}. We order $R$ arbitrarily as $(w_1,\ldots,
w_{|R|})$ and assign the label of a node $u\in V$ as
\[
    \ell(u) = dist_G(u,w_1)\circ dist_G(u,w_2)\circ \ldots\circ
    dist_G(u,w_{|R|})
\]

\begin{lemma}\label{lem:old_dist_correct}
    Let $u$ and $v$ be two nodes of some graph $G\in\calG$. Set
    \begin{equation}\label{eq:old_dist}
        d = \min_{w\in R} dist_G(u,w) + dist_G(v,w)\ .
    \end{equation}
    Then $d\ge dist_G(u,v)$ and $d = dist_G(u,v)$ if $R$ contains a node from a
    shortest path between $u$ and $v$.
\end{lemma}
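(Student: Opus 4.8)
The statement has two parts: the upper bound $d \ge dist_G(u,v)$ always, and the exactness $d = dist_G(u,v)$ whenever some $w \in R$ lies on a shortest $u$--$v$ path. The plan is to prove each part directly from the triangle inequality and the definition of a shortest path, since the label $\ell(u)$ records exactly the distances $dist_G(u,w)$ for every $w \in R$, so the decoder can compute $d$ from $\ell(u)$ and $\ell(v)$ alone.

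\textbf{Upper bound.} First I would fix an arbitrary $w \in R$. By the triangle inequality in $G$ (valid for both directed and undirected graphs, reading the path $u \to w \to v$), we have $dist_G(u,w) + dist_G(w,v) \ge dist_G(u,v)$. Here one must be mildly careful in the directed case: the label stores $dist_G(u,w)$, i.e.\ the distance \emph{from} $u$ \emph{to} $w$, and the quantity in~\eqref{eq:old_dist} should be read consistently as $dist_G(u,w) + dist_G(w,v)$ so that concatenating the two shortest paths yields a walk from $u$ to $v$. Taking the minimum over all $w \in R$ preserves the inequality, giving $d \ge dist_G(u,v)$.

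\textbf{Exactness.} Now suppose some $w^\* \in R$ lies on a shortest path $P$ from $u$ to $v$. Then $P$ decomposes into a $u$--$w^\*$ subpath and a $w^\*$--$v$ subpath, each of which must itself be a shortest path (any shortcut on a subpath would shorten $P$), so $dist_G(u,w^\*) + dist_G(w^\*,v) = dist_G(u,v)$. Hence the term for $w^\*$ in the minimum equals $dist_G(u,v)$, and combined with the lower bound $d \ge dist_G(u,v)$ from the previous paragraph we conclude $d = dist_G(u,v)$.

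\textbf{Main obstacle.} There is no real obstacle here — the lemma is essentially a restatement of the triangle inequality plus the substructure property of shortest paths. The only point requiring attention is bookkeeping the orientation of distances in the directed setting so that the two stored distances compose into a genuine $u$-to-$v$ walk; once the decoder is defined to add $dist_G(u,w)$ and $dist_G(w,v)$ in the correct order, everything goes through verbatim. Note also that this lemma says nothing about label \emph{size} or about the sampling succeeding; those are handled separately, and here $R$ is treated as an arbitrary (multi)set.
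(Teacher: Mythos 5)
Your proof is correct and follows essentially the same route as the paper: triangle inequality for the upper bound, and subpath optimality of shortest paths for the matching lower bound when some $w \in R$ hits a shortest $u$--$v$ path. If anything, your version is slightly cleaner than the paper's, which first names the minimizer $z$ and then ``assumes $z \in p$''; you avoid that mild conflation by working with an arbitrary $w^{*} \in R$ on the shortest path and deducing $d \le dist_G(u,w^{*}) + dist_G(w^{*},v) = dist_G(u,v)$ directly from the definition of the minimum, and your remark about orienting $dist_G(u,w) + dist_G(w,v)$ correctly in the directed case is a valid and worthwhile clarification of the formula as written.
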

\begin{proof}
    Let $z\in R$ be the node corresponding to the minimum value of
    \eqref{eq:old_dist}. We then have $d = dist_G(u,z) + dist_G(z,v)$. By the
    triangle inequality this implies $d\ge dist(u,v)$.

    Now let $p$ be some shortest path between $u$ and $v$ in $G$ and assume
    that $z\in p$. Then $dist_G(u,v) = dist_G(u,z) + dist_G(z,v)$, implying
    that $d \le dist_G(u,v)$, and thus $d = dist(u,v)$.
\end{proof}

By Lemma~\ref{lem:old_dist_correct} it only remains to show that the set $R$ is likely
to contain a node on a shortest path between any pair of nodes $u,v\in V$ with
$dist_G(u,v)\ge D$.

\begin{lemma}\label{lem:old_dist_probability}
    Let $R$ be defined as above. Then the probability that there exists a pair
    of nodes $u,v\in V$ such that $dist_G(u,v)\ge D$ and no node on the
    shortest path between $u$ and $v$ is sampled is at most $n^{2-c}$.
\end{lemma}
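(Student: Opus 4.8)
The plan is a straightforward union bound over all ``bad'' pairs combined with an estimate of the probability that a single fixed shortest path avoids the random multiset $R$. First I would fix a pair of nodes $u,v \in V$ with $dist_G(u,v) \ge D$ and fix one shortest path $p$ between them; this path has at least $D+1$ nodes, hence at least $D$ nodes (in fact we only need that it contains a set $S_p$ of at least $D$ distinct vertices). For a single element $w_i$ of $R$, which is chosen uniformly from $V$, we have $\Prp{w_i \in S_p} \ge D/n$, so $\Prp{w_i \notin S_p} \le 1 - D/n$. Since the $|R| = \ceil{c \cdot \frac{n}{D}\log n}$ elements are chosen independently, the probability that \emph{no} element of $R$ lies on $p$ is at most $\left(1 - \frac{D}{n}\right)^{|R|}$.

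Next I would bound this quantity using the standard inequality $1 - x \le e^{-x}$, giving
\[
    \left(1 - \frac{D}{n}\right)^{|R|} \le \exp\!\left(-\frac{D}{n}\cdot c\cdot\frac{n}{D}\log n\right) = \exp(-c\log n) = n^{-c}\ .
\]
Then I would take a union bound over all pairs $u,v$: there are at most $\binom{n}{2} < n^2$ such pairs, so the probability that \emph{some} pair $u,v$ with $dist_G(u,v)\ge D$ has its (chosen) shortest path avoided by $R$ is at most $n^2 \cdot n^{-c} = n^{2-c}$. Here one must be slightly careful: for each pair we commit in advance to one particular shortest path $p_{u,v}$, and the event we bound is ``$R$ misses $p_{u,v}$''; if $R$ hits $p_{u,v}$ then certainly $R$ contains a node on \emph{a} shortest path between $u$ and $v$, which is all Lemma~\ref{lem:old_dist_correct} needs. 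This is enough to conclude.

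The only mild subtlety — and the one place the argument could go wrong — is the lower bound $\Prp{w_i \in S_p} \ge D/n$: it relies on the shortest path $p$ containing at least $D$ \emph{distinct} vertices. Since $p$ is a shortest path of length $dist_G(u,v) \ge D$ in an unweighted graph, it visits $dist_G(u,v)+1 \ge D+1$ distinct vertices (a shortest path never repeats a vertex), so this holds comfortably. With that observation in place the rest is the routine union-bound calculation above, and choosing $c$ to be a sufficiently large constant (say $c \ge 4$) makes the failure probability $n^{2-c}$ polynomially small, so a valid label assignment exists.
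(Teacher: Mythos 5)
Your proof is correct and follows the same approach as the paper: fix a shortest path of length at least $D$ for each bad pair, bound the probability that all $|R| = \ceil{c\frac{n}{D}\log n}$ independent uniform samples miss it by $(1-D/n)^{|R|} \le n^{-c}$, and union bound over at most $n^2$ pairs. The extra remarks about committing to one shortest path per pair and about distinctness of vertices on a shortest path are reasonable clarifications but not departures from the paper's argument.
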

\begin{proof}
    Consider a pair of nodes $u,v\in V$ with $dist_G(u,v)\ge D$. Let $p$ be a
    shortest path between $u$ and $v$, then $|p|\ge D$. Each element of $R$ has
    probability at least $D/n$ of belonging to $p$ (independently), so the
    probability that no element of $R$ belonging to $p$ is at most
    \begin{equation}\label{eq:sample_prob}
        \left(1 - \frac{D}{n}\right)^{|R|} \le \exp\left(-\frac{D}{n}\cdot
            |R|\right) \le \exp(-c\log n) = n^{-c}\ .
    \end{equation}
    Since there are at most $n^2$ such pairs, by a union bound the probability
    that there exists a pair $u,v$ with $dist_G(u,v)\ge D$, such that no
    element on a shortest path between $u$ and $v$ is sampled in $R$ is thus at
    most $n^2\cdot n^{-c} = n^{2-c}$
\end{proof}
By setting $c > 2$ we can ensure that the expected number of times we have to
re-sample the set $R$ until the condition of Lemma~\ref{lem:old_dist_probability} is
satisfied is $O(1)$. The labels can be assigned using $O(|R|\log n) =
O(\frac{n}{D}\log^2 n)$ bits as each distance can be stored using $O(\log n)$
bits.

\subsection{A scheme for medium distances}\label{sec:med_d_dist}
We now present a scheme, which preserves distances in the range $[D,2D]$
using $O\!\left(\frac{n}{D}\log^2 D\right)$ bits. More formally, we present a
labeling scheme such that given a family of unweighted graphs
$\calG$ the encoder, $e$, and the decoder, $d$, satisfies the following
constraints for any $G\in\calG$:
\begin{enumerate}
    \item $d(e_G(u), e_G(v)) \ge dist_G(u,v)$ for any $u,v\in G$.
    \item $d(e_G(u), e_G(v)) = dist_G(u,v)$ for any $u,v\in G$ with
        $dist_G(u,v)\in [D,2D]$.
\end{enumerate}
Let such a labeling scheme be called a \emph{$[D,2D]$-preserving distance
labeling scheme}.

The labeling scheme is based on a sampling procedure similar to that presented
in Section~\ref{sec:old_d_dist_scheme}, but improves the label size by
introducing the notion of \emph{sick} and \emph{healthy} nodes.
Below we described only the labeling scheme for undirected graphs. We note that
this can be turned into a labeling scheme for directed graphs by at most
doubling the label size. In our undirected labeling scheme we store distances
to several nodes in the graph, and for a directed scheme one simply needs to
store both distances \emph{to} and \emph{from} these nodes. This will be
evident from the description below.

Let $G=(V,E)\in\calG$. We sample a multiset $R$ of size $2\cdot \frac{n}{D}\log
D$. Similar to Section~\ref{sec:old_d_dist_scheme}, each element of $R$ is picked
uniformly at random from $V$.

\begin{definition}
    Let $R$ be as defined above and fix some node $u$. We say that a node $v$
    is \emph{uncovered} for $u$ if $dist_G(u,v)\ge D$ and no node in $R$ is
    contained in a shortest path between $u$ and $v$. A node $u$ with more than
    $\frac{n}{D}$ uncovered nodes is called \emph{sick} and all other
    nodes are called \emph{healthy}.
\end{definition}

Let $S$ denote the set of sick nodes and let $\uc(u)$ denote the set of
uncovered nodes for $u$. The main outline of the scheme is as follows:
\begin{enumerate}
    \item Each node $u$ stores the distance from itself to each node of $R\cup
        S$.
    \item If $u$ is healthy, $u$ stores the distance from itself to every $v\in
        \uc(u)$ for which $dist_G(u,v)\in [D,2D]$.
\end{enumerate}
We start by showing that the set of sick nodes has size $O(n/D)$ with
probability at least $1/2$. This is captured by the following lemma.
\begin{lemma}\label{lem:num_sick}
    Let $R$ be defined as above and let $S$ be the set of sick nodes. Then
    \[
        \Prp{|S| \ge 2\frac{n}{D}} \le 1/2\ .
    \]
\end{lemma}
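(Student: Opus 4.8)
The plan is to control $\Ep{|S|}$ via two nested applications of linearity of expectation together with Markov's inequality, and then finish with one further Markov step.

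First I would fix a node $u$ together with a node $v$ satisfying $dist_G(u,v)\ge D$ and bound the probability that $v$ is uncovered for $u$. This is precisely the estimate already carried out in \eqref{eq:sample_prob}: picking a shortest $u$--$v$ path $p$, which contains at least $D$ vertices, each of the $|R| = 2\frac{n}{D}\log D$ independently chosen elements of $R$ lies on $p$ with probability at least $D/n$, so
\[
    \Prp{v\text{ uncovered for }u}\le\left(1-\frac{D}{n}\right)^{|R|}\le\exp\!\left(-\frac{D}{n}\,|R|\right)=\exp(-2\log D)=D^{-2}\,,
\]
using the convention $\log=\ln$. Summing over the at most $n$ candidates $v$ gives $\Ep{|\uc(u)|}\le n/D^2$, and since $u$ is sick exactly when $|\uc(u)|>n/D$, Markov's inequality yields $\Prp{u\text{ sick}}\le (n/D^2)/(n/D)=1/D$.

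I would then apply linearity of expectation a second time, now over all $n$ nodes $u$, to obtain $\Ep{|S|}\le n/D$, and conclude with Markov's inequality that $\Prp{|S|\ge 2n/D}\le (n/D)/(2n/D)=1/2$. The argument is essentially routine; the only subtleties are that $R$ is a multiset, so the inclusion events for its distinct elements are genuinely independent (exactly as in Section~\ref{sec:old_d_dist_scheme}), and that with $|R|=2\frac nD\log D$ the per-pair failure probability is only $D^{-2}$ rather than $n^{-c}$ — this is the price of the smaller sample, and it is why sick nodes must be handled separately rather than ruled out entirely.
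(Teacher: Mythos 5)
Your proof is correct and follows exactly the same route as the paper: reuse the estimate from \eqref{eq:sample_prob} to get a per-pair uncovered probability of $D^{-2}$, apply linearity of expectation to bound $\Ep{|\uc(u)|}\le n/D^2$, Markov to get $\Prp{u\text{ sick}}\le 1/D$, linearity again for $\Ep{|S|}\le n/D$, and a final Markov step. The extra remarks about multiset independence and the reason the failure probability drops to $D^{-2}$ are accurate but not new ideas.
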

\begin{proof}
    Fix some node $u\in V$ and let $v\in V$ be a node such that $dist_G(u,v)\ge
    D$. Using the same argument as in \eqref{eq:sample_prob} of
    Lemma~\ref{lem:old_dist_probability} we see that the probability that $v$ is
    uncovered for $u$ is at most $D^{-2}$. Therefore $\Ep{|\uc(u)|} \le
    \frac{n}{D^2}$. By Markov's inequality we have
    \[
        \Prp{u \text{ is sick}} = \Prp{|\uc(u)| \ge D\cdot \frac{n}{D^2}} \le
            \frac{1}{D}\ ,
    \]
    and thus $\Ep{|S|} \le n/D$. We again use Markov's inequality to conclude
    that
    \[
        \Prp{|S| \ge 2\frac{n}{D}} \le 1/2\ .
    \]
\end{proof}

The goal is now to store the distances to the nodes of $R\cup S$ as well as
$\uc(u)$ using few bits.
First consider the distances to the nodes of $R\cup S$. Observe that since we
only wish to recover distances in the interval $[D,2D]$ we only need to store
distances to the nodes of $R\cup S$ which are at most $2D$ away. Let $u$ be any
node in $G$. We will store the distances from $u$ to the relevant nodes of
$R\cup S$ as follows: We first fix a canonical ordering of the nodes in $R\cup
S$, which is the same for all nodes $u\in G$. For each node of $R\cup S$ in
order we now store either a 0-bit if its distance to $u$ is greater than $2D$.
Otherwise we store the distance using at most $O(\log D)$ bits.

We may now assign the label $\ell(u)$ of a node $u$ to be $id(u)$ concatenated
with the bitstring resulting from the above procedure for $R\cup S$. If $u$ is
healthy we concatenate an identifier for the set $uc(u)$ of uncovered nodes
restricted to nodes within distance $[D,2D]$ along with the distance to each of
these nodes. The decoder works by simply checking if one nodes stores the
others distance or by taking the minimum of going via any node in $R\cup S$.

\paragraph{Label size}
In order to bound the size of the label we first observe that $R\cup S$ has
size at most $O(\frac{n}{D}\log D)$ and we can thus store the distance (or a 0-bit) to
each of these nodes using $O(\frac{n}{D}\log^2 D)$ bits. We thus only need to bound the
size of storing id's and distances to the nodes of $uc(u)$ whose distance is in
$[D,2D]$. Since we only store this for healthy nodes this set has size at most
$n/D$ and can be described using at most
\[
    O\!\left(\log\!\binom{n}{n/D}\right) = O\!\left(\frac{n}{D}\log D\right)
\]
bits. Since each distance can be stored using $O(\log D)$ bits we conclude that
the total label size is bounded by $O\!\left(\frac{n}{D}\log^2 D\right)$.

\begin{theorem}\label{thm:d_dist_med}
    There exists a $[D,2D]$-preserving distance labeling scheme for the family
    $\calG_n$ of unweighted graphs on $n$ nodes with maximum
    label size
    \[
        O\!\left(\frac{n}{D}\log^2 D\right)\ .
    \]
\end{theorem}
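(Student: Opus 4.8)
The construction is exactly the one laid out above, so the plan is mainly to \emph{verify} it: pin down the encoding and the decoder precisely, check correctness by a short case analysis, and bound the label size using Lemma~\ref{lem:num_sick}. Note that, in contrast with the warm-up scheme of Section~\ref{sec:old_d_dist_scheme}, we do \emph{not} need the high-probability covering guarantee of Lemma~\ref{lem:old_dist_probability}; we only need $|S|$ to be small, because pairs whose shortest path is uncovered are handled explicitly. The one step I would be careful with is making the labels self-describing: the decoder sees neither $G$ nor $R\cup S$, so each label must carry enough bookkeeping for the lookups below to be well-defined.

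Concretely, I would fix once and for all a canonical ordering $w_1,\dots,w_{|R\cup S|}$ of $R\cup S$ (say, by node id) and let $\ell(u)$ store: (i) $id(u)$, a flag for whether $u\in R\cup S$, and, if so, its rank and whether $u\in S$; (ii) the truncated distance vector, i.e.\ for each $i$ the value $dist_G(u,w_i)$ written in $O(\log D)$ bits if it is at most $2D$ and a sentinel otherwise; and (iii) if $u$ is healthy, a combinatorial identifier of the set $U(u)=\{v\in\uc(u):dist_G(u,v)\in[D,2D]\}$ as a subset of $[n]$, followed by $dist_G(u,v)$ for each $v\in U(u)$. On input $(\ell(u),\ell(v))$ the decoder returns the minimum over the following candidates, or $n$ if there are none: a direct read of $dist_G(u,v)$ when $v\in R\cup S$ (its rank taken from $\ell(v)$, the distance from $\ell(u)$), and symmetrically; a direct read when $u$ is healthy and $id(v)\in U(u)$, and symmetrically; and, for every index $i$ at which both labels hold a genuine distance, the sum $dist_G(u,w_i)+dist_G(w_i,v)$ — meaningful because the two labels index $R\cup S$ in the same order.

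For correctness I would first observe that every candidate is an upper bound on $dist_G(u,v)$: stored distances are exact, and the sums are at least $dist_G(u,v)$ by the triangle inequality, so the decoder never underestimates. Then, assuming $\delta:=dist_G(u,v)\in[D,2D]$, fix a shortest $u$--$v$ path $p$. If some $w\in p$ lies in $R$, then both $dist_G(u,w)$ and $dist_G(w,v)$ are at most $\delta\le 2D$, hence stored, and the via-$w$ candidate equals $dist_G(u,w)+dist_G(w,v)=\delta$. Otherwise $v$ is uncovered for $u$ (and, symmetrically, $u$ for $v$), so if $u$ is healthy then $v\in U(u)$ and the direct read gives $\delta$, while if $u$ is sick then $u\in S\subseteq R\cup S$ and $dist_G(v,u)=\delta\le 2D$ is stored in $\ell(v)$, so the (symmetric) direct read gives $\delta$. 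In every case the returned minimum is exactly $\delta$.

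Finally, for the size bound I would condition on the event $|S|<2n/D$, which holds with probability at least $1/2$ by Lemma~\ref{lem:num_sick}; if it fails we resample $R$, so it holds after $O(1)$ trials in expectation. Then $|R\cup S|=O(\frac{n}{D}\log D)$, so the distance vector costs $O(\frac{n}{D}\log D)\cdot O(\log D)=O(\frac{n}{D}\log^2 D)$ bits; for a healthy $u$ we have $|U(u)|\le|\uc(u)|\le n/D$, so its identifier costs $\log\binom{n}{n/D}=O(\frac{n}{D}\log D)$ bits and the attached distances another $O(\frac{n}{D}\log D)$ bits; and $id(u)$ together with the rank/flag data costs $O(\log n)$ bits, which is within $O(\frac{n}{D}\log^2 D)$. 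Summing yields $O(\frac{n}{D}\log^2 D)$, as claimed. Both the counting and the case analysis are routine; the genuine content is Lemma~\ref{lem:num_sick} together with getting the label's self-description right in the second step.
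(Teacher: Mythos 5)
Your proposal is correct and follows exactly the paper's construction (sampled set $R$, sick/healthy split via Lemma~\ref{lem:num_sick}, truncated distance vector to $R\cup S$ plus an explicit list of in-range uncovered nodes for healthy vertices), with the same size accounting. You simply spell out the self-describing bookkeeping and the decoder's case analysis more carefully than the paper, which states the theorem as an immediate corollary of the preceding discussion.
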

\begin{proof}
    This is a direct corollary of the discussion above.
\end{proof}

\subsection{Bootstrapping the scheme}\label{sec:d_dist_full}
In order to show Theorem~\ref{thm:d_dist} we will concatenate several instances of the
label from Theorem~\ref{thm:d_dist_med}. First define $\ell_D(u)$ to be the
$[D,2D]$-preserving distance label for the node $u$ assigned by the scheme of
Theorem~\ref{thm:d_dist_med}. Now assign the following label to each node $u$:
\begin{equation}\label{eq:d_dist_scheme}
    \ell(u) = \ell_D(u)\ \circ\ \ell_{2D}(u)\ \circ\ \ell_{4D}(u)\ \circ\
    \ldots\ \circ\ \ell_{2^kD}(u)\ ,
\end{equation}
where $k = \floor{\lg(n/D)}$. Let $d_D$ be the distance returned by running the
decoder of Theorem~\ref{thm:d_dist_med} on the corresponding component,
$\ell_D(u)$, of the label $\ell(u)$. Then we let the decoder of the full
labeling scheme return
\begin{equation}\label{eq:d_dist_dec}
    \hat{d} = \min(d_D, d_{2D}, \ldots, d_{2^kD})\ ,
\end{equation}
with $k$ defined as above. We are now ready to prove Theorem~\ref{thm:d_dist}.

\begin{proof}[Proof of Theorem~\ref{thm:d_dist}]
    Consider any pair of nodes $u,v$ in some graph $G\in\calG_n$ and let $d
    = dist_G(u,v)$. Also, let $\hat{d}$ be the value returned by the decoder
    for $\ell(u)$ and $\ell(v)$. If $d\le D$ we have $\hat{d}\ge d$. Now assume
    that $d\in [2^i\cdot D, 2^{i+1}\cdot D]$ for some non-negative integer $i$.
    Then, by Theorem~\ref{thm:d_dist_med} and \eqref{eq:d_dist_dec} we
    have $\hat{d} = d$.

    The size of the label assigned by \eqref{eq:d_dist_scheme} is bounded by
    \begin{align*}
        \sum_{i=0}^{\floor{\lg_2(n/D)}} O\!\left(\frac{n}{2^i\cdot D}\log^2(2^i\cdot
        D)\right)
        &\le
        \sum_{i=0}^{\infty} O\!\left(\frac{n}{2^i\cdot D}\log^2(2^i\cdot
        D)\right) \\
        &\le
        O\!\left(\frac{n}{D}\log^2(D)\sum_{i=1}^{\infty} \frac{i^2+1}{2^i}\right) \\
        &=
        O\!\left(\frac{n}{D}\log^2(D)\right)
    \end{align*}
\end{proof}

\subsection{Lower bound}\label{sec:lower_bound}

\begin{proof}[Proof of Theorem~\ref{thm:lowerapprox}]

    Let $k = \floor{\frac{n}{D+1}}$ and let $L$ and $R$ be
    sets of $k$ nodes which make up the left and
    right side of a bipartite graph respectively. Furthermore, let each node of
    $R$ be the first node on a path of $D$ nodes.

    Consider now the family of all such bipartite graphs $(L,R)$ with the
    attached paths. There are exactly $2^{k^2}$ such graphs.

    Now observe, that a node $u\in L$ is adjacent to a node $v\in R$ if and
    only if $dist(u,w) = D$, where $w$ is the last node on the path starting in
    $v$. By querying all such pairs $(u,w)$ we obtain $k^2$ bits of
    information using only $2k$ labels, thus at least one label of size
    \[
        \frac{k^2}{2k} =
        \frac{\floor{\frac{n}{D+1}}}{2}
        \ge
        \frac{n}{8D}
    \]
    is needed. Since the graph has $ \le n$ nodes this implies the result.

\end{proof}
This is illustrated in Figure~\ref{fig:lower_bound}.
\begin{figure}[htbp]
    \centering
    \includegraphics[width=.5\textwidth]{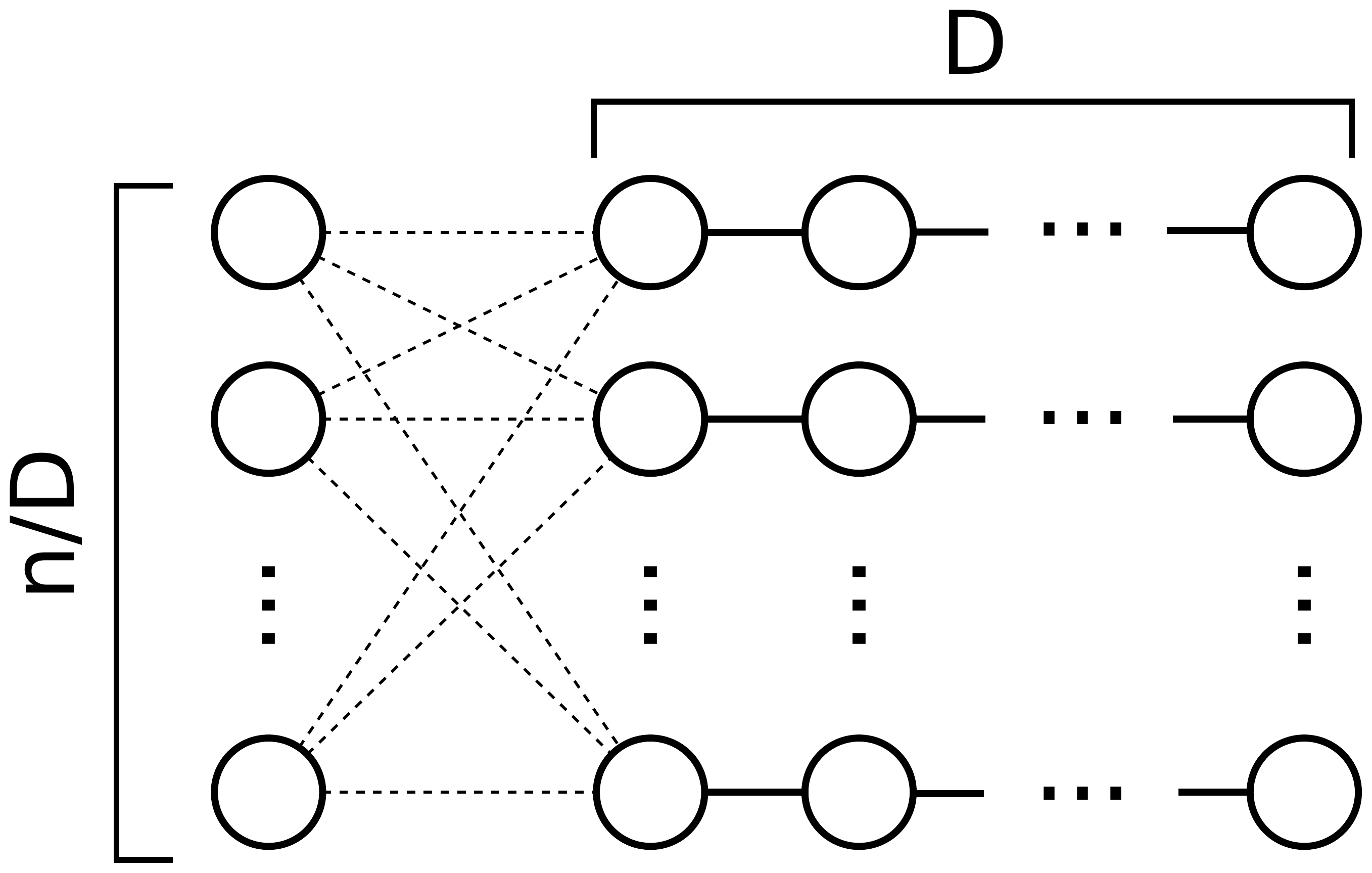}
    \caption{Illustration of the graph family used in the proof of
    Section~\ref{sec:lower_bound}.}
    \label{fig:lower_bound}
\end{figure}

\section{Sparse and bounded degree graphs}\label{sec:bound_sparse}
We are now ready to prove Theorem~\ref{thm:sparse}. In fact we will show the following
more general lemma:
\begin{lemma} \label{lem:sparse}
    Let $\mathcal{H}_{n,m}$ denote the family of unweighted
    graphs on $n$ nodes with at most $m$ edges. Then there
    exists a distance labeling scheme for $\mathcal{H}_{n,m}$ with maximum label
    size
    \[
        O\!\left ( \frac{n}{D} \cdot \log^2 D \right )
        , \ \text{where} \  \ D = \frac{\log n}{1+\log \frac{m+n}{n}}
    \]
\end{lemma}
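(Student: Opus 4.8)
\textbf{Proof plan for Lemma~\ref{lem:sparse}.}
The plan is to combine the $D$-preserving distance labeling scheme of Theorem~\ref{thm:d_dist} with an explicit short-range component that handles all pairs at distance less than $D$. For a graph $G\in\mathcal{H}_{n,m}$ and the given threshold $D = \frac{\log n}{1+\log\frac{m+n}{n}}$, every node $u$ gets a label consisting of two parts: (i) the $D$-preserving label of Theorem~\ref{thm:d_dist}, which already costs $O\!\left(\frac{n}{D}\log^2 D\right)$ bits and correctly reports every distance that is at least $D$; and (ii) a compact encoding of the ball $B_D(u)$ of radius $D$ around $u$ — essentially the induced subgraph on nodes within distance $D$ of $u$, together with node identifiers — from which one can read off $dist_G(u,v)$ whenever $dist_G(u,v) < D$. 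The decoder first checks whether $v$'s identifier appears in $u$'s ball (or vice versa); if so it returns the stored short distance, and otherwise it returns the value from the $D$-preserving component. Correctness is immediate from Theorem~\ref{thm:d_dist} and from the fact that $B_D(u)$ literally contains the exact distance to every node within distance $D$.

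The crux is bounding the size of part (ii). A ball of radius $D$ in a graph with $m$ edges can in the worst case be large, so one cannot store balls for every node naively; the key observation is a counting/averaging argument. Since $G$ has at most $m+n \le n\cdot\frac{m+n}{n}$ edges counting each endpoint, the average degree is $O\!\left(\frac{m+n}{n}\right) =: \bar d$, and hence the average size of a ball of radius $D$ is at most roughly $\bar d^{\,D}$ (growth is bounded by a branching factor of $\bar d$ per layer, in the averaged sense). With $D = \frac{\log n}{1+\log\bar d} = \frac{\log n}{\log(e\bar d)}$ we get $\bar d^{\,D} \le (e\bar d)^{D} = n^{O(1)}$ — more precisely the exponent is tuned so that $\bar d^{\,D}$ is at most about $n^{1}$ on average, and storing such a ball (as a list of identifiers plus its induced subgraph and a BFS-distance table) costs $O(\bar d^{\,D}\log n)$ bits. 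One then wants this to be $O\!\left(\frac{n}{D}\log^2 D\right)$, which is where the precise choice of $D$ and a careful choice of constant in the exponent enters. The obstacle is that balls are not uniformly sized: some nodes may have exponentially large balls. The standard fix is to split nodes by ball size, store the explicit ball only for nodes whose ball is ``small'', and handle the (few) nodes with large balls separately — either by adding them to a hitting set along the lines of the sick/healthy dichotomy used in Section~\ref{sec:med_d_dist}, or by observing that a node with a large ball of radius $D$ lies within distance $D$ of a correspondingly large set, so there cannot be too many ``far apart'' large-ball nodes and their mutual distances are all recoverable from the $D$-preserving part while their near distances are recovered via a shared representative.

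The steps, in order, are: (1) fix $D$ as in the statement and instantiate the Theorem~\ref{thm:d_dist} scheme with threshold $D$, contributing $O\!\left(\frac{n}{D}\log^2 D\right)$ bits; (2) set a size threshold $T = \Theta\!\left(\frac{n}{D}\log D\right)$ and call a node \emph{light} if $|B_D(u)| \le T$ and \emph{heavy} otherwise; (3) for light nodes, append an explicit encoding of $B_D(u)$ using $O(T\log n)$ bits, which a short calculation with the choice of $D$ shows is $O\!\left(\frac{n}{D}\log^2 D\right)$ — this is the place where $\log n / D \approx 1 + \log\frac{m+n}{n}$ is used, so $\bar d^{\,D}$ stays polynomially bounded and in fact small enough after tuning constants; (4) bound the number of heavy nodes: since the balls $B_{D/2}(u)$ of heavy nodes are large and edge count is $m$, a double-counting argument limits their number, and handle all distances touching a heavy node by adding heavy nodes (or suitable representatives) to the hitting set $R$ of the $D$-preserving scheme and arguing that pairs of heavy nodes either are close — hence covered by a common representative stored explicitly — or far — hence covered by Theorem~\ref{thm:d_dist}; (5) assemble the decoder and verify the two labeling-scheme properties. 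I expect step (3)/(4) — making the averaged ball-size bound hold pointwise by removing few exceptional heavy nodes, and checking the arithmetic that $T\log n = O\!\left(\frac{n}{D}\log^2 D\right)$ for this exact $D$ — to be the main obstacle; everything else is bookkeeping on top of Theorem~\ref{thm:d_dist}.
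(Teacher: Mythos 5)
There is a genuine gap in the averaging step, and it is not a minor technicality — it is where the whole plan breaks. You assert that because the average degree is $\bar d = O\!\left(\frac{m+n}{n}\right)$, the \emph{average ball size} at radius $D$ is roughly $\bar d^{\,D}$, and that the number of ``heavy'' nodes with large balls can be bounded by a double-counting argument on the edge set. Neither of these holds. Consider a star: one hub $h$ adjacent to the other $n-1$ nodes, so $m = n-1$ and $\bar d \approx 2$. Then $B_2(u) = V$ for every node $u$, so \emph{every} node is heavy, the average ball size is $\Theta(n) \gg \bar d^{\,D}$, and no double-counting on $m$ can save you — $\sum_u |B_D(u)|$ is $\Theta(n^2)$ while $m = O(n)$. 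The point is that growth of balls is governed by individual degrees along paths, not by the average degree, and a single high-degree vertex can blow up the balls of $\Theta(n)$ other nodes. Step (4), which tries to argue that there are only a few heavy nodes to add to the hitting set, therefore fails in the very regime (sparse graphs with a few hubs) that the lemma is supposed to cover.

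The paper avoids this entirely by a structural transformation rather than an averaging argument: first prove the bound for \emph{bounded-degree} graphs (Lemma~\ref{lem:bound_deg}), where the ball bound $\Delta^D$ is a pointwise worst-case bound and no heavy/light split is needed, and then reduce $\mathcal{H}_{n,m}$ to the bounded-degree case by \emph{splitting} each node $u$ of degree $> k = \max\{\lceil m/n\rceil, 3\}$ into $\lceil\deg(u)/(k-2)\rceil$ copies joined by a path of $0$-weight edges. This replaces hubs by long paths, so the auxiliary graph $G'$ has maximum degree $k$ and only $O(n)$ nodes, and distances are preserved. The one subtlety the paper flags is that $G'$ now has $0$/$1$ weights, but the $D$-preserving scheme of Theorem~\ref{thm:d_dist} actually preserves distances for pairs with at least $D$ \emph{edges} on a shortest path, which is exactly what is needed. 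Your step (i), the $D$-preserving component, is the same as the paper's; the short-range component needs to go through the node-splitting reduction rather than a ball-size average, because the latter is false.
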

Since $\frac{\log n}{1+\log \frac{m+n}{n}} = \omega(1)$ when $m = n^{1+o(1)}$ it
will suffice to prove Lemma~\ref{lem:sparse}. In order to do so we first show
the following lemma for bounded-degree graphs:
\begin{lemma}\label{lem:bound_deg}
    Let $\mathcal{B}_n(\Delta)$ be the family of graphs on $n$ nodes with
    maximum degree $\Delta$. There exists a distance labeling scheme for
    $\mathcal{B}_n(\Delta)$ with maximum label size
    \[
        O\!\left ( \frac{n}{D} \log^2 D \right )
        , \ \text{where} \  \ D = \frac{\log n}{1 + \log \Delta}
    \]
\end{lemma}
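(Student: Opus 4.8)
The plan is to apply the $D$-preserving distance labeling scheme of Theorem~\ref{thm:d_dist} and combine it with a cheap way of recovering the \emph{small} distances (those below $D$), exploiting the bounded degree. First I would set $D = \frac{\log n}{1+\log\Delta}$ and attach to every node $u$ its $D$-preserving label from Theorem~\ref{thm:d_dist}; this costs $O\!\left(\frac{n}{D}\log^2 D\right)$ bits, and it already lets the decoder read off $dist_G(u,v)$ exactly whenever $dist_G(u,v)\ge D$. So the only thing left is to recover $dist_G(u,v)$ when it is strictly less than $D$.

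For the small distances the key observation is that in a graph of maximum degree $\Delta$, the ball $B(u,D-1)$ of radius $D-1$ around any node $u$ contains at most $1+\Delta+\Delta^2+\cdots+\Delta^{D-1} = \Delta^{O(D)}$ nodes. With $D = \frac{\log n}{1+\log\Delta}$ we get $\Delta^{O(D)} = n^{O(1)}$ — in fact, choosing the constant in $D$ appropriately, this ball has size $n^{o(1)}$ (or at least polynomially bounded, which is all we need). So each node $u$ can afford to store a complete description of its radius-$(D-1)$ ball: the induced subgraph on $B(u,D-1)$ together with the original identifiers of those nodes. Since the ball has $n^{o(1)}$ nodes, each identifier takes $O(\log n)$ bits and the induced subgraph (bounded degree, so $O(\Delta)$ edges per node) takes $n^{o(1)}\cdot O(\Delta\log n)$ bits, which is $n^{o(1)} = o\!\left(\frac{n}{D}\log^2 D\right)$; in any case this term is dominated by the $D$-preserving part. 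The decoder, given $\ell(u)$ and $\ell(v)$, first checks whether $v$'s identifier appears in $u$'s stored ball; if so it returns the exact distance computed inside that ball (which equals $dist_G(u,v)$ since the ball contains a whole shortest path), and if not it returns the value from the $D$-preserving decoder. Correctness: if $dist_G(u,v) < D$ then $v \in B(u,D-1)$ and the ball-based answer is exact; if $dist_G(u,v)\ge D$ then the $D$-preserving decoder is exact; and the $D$-preserving decoder never underestimates, so even in the boundary bookkeeping we always return a correct value.

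The main obstacle — and the only real calculation — is pinning down the constant in the definition of $D$ so that $\Delta^{O(D)}$ is genuinely $n^{o(1)}$ (or at worst $\mathrm{poly}(n)$) while simultaneously $\frac{n}{D}\log^2 D = o(n)$; this is a routine logarithm manipulation, using $D\Delta^{D}$-type bounds, but it is where one has to be careful about whether $\Delta$ is constant, polylogarithmic, or as large as $n^{o(1)}$. A secondary point worth spelling out is the encoding: the ball description has variable length, so I would prepend it with an Elias $\gamma$-coded length (as set up in Section~\ref{sec:prelims}) before concatenating it with the $D$-preserving label, so that the decoder can parse the two parts. With these pieces in place the label size is $O\!\left(\frac{n}{D}\log^2 D\right) + n^{o(1)} = O\!\left(\frac{n}{D}\log^2 D\right)$, as claimed, and Lemma~\ref{lem:bound_deg} follows.
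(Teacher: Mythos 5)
Your proposal is correct and matches the paper's approach: attach a $D$-preserving label from Theorem~\ref{thm:d_dist} and separately store the contents of the small ball around each node, choosing the constant in $D$ so that $\Delta^D$ is polynomially bounded (the paper takes $D = \bigl\lceil\frac{\log n}{1+2\log\Delta}\bigr\rceil$, giving $\Delta^D = O(\sqrt{n})$, and stores just the pairs $(id(v),dist(u,v))$ rather than the whole induced subgraph, which is simpler but the same idea). One imprecision worth flagging: with any constant $c$ in the denominator the ball $B(u,D-1)$ has $\Delta^{O(D)} = n^{\Theta(1)}$ nodes, not $n^{o(1)}$ as you assert in two places; your fallback remark that polynomial size suffices is the one that actually carries the argument, since $\Delta^D\log n = o\!\left(\frac{n}{D}\log^2 D\right)$ still holds.
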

\begin{proof}
    Suppose we are labeling some graph $G\in\mathcal{B}_n(\Delta)$ and let
    $u\in G$. Let $D = \ceil{\frac{\log n}{1+2\log \Delta}}$ and let $\ell_D(u)$ be the
    $D$-distance preserving label assigned by using Theorem~\ref{thm:d_dist} with
    parameter $D$. Using this label we can deduce the distance to all nodes of
    distance at least $D$ to $u$.

    Since $G\in\mathcal{B}_n(\Delta)$ there are at most $\Delta^D = O\!\left(\sqrt{n}\right)$
    nodes closer than distance $D$ to $u$. Thus, we may describe the IDs
    and distances of these nodes using at most $O(\sqrt{n}\log n)$ bits. This
    gives the desired total label size of
    \[
        |\ell(u)| = O\!\left(\sqrt{n}\log n + \frac{n}{D}\log^2 D\right)
        =O\!\left(\frac{n}{D}\log^2 D\right)
    \]
\end{proof}

Using this result we may now prove Lemma~\ref{lem:sparse} by reducing to the
bounded degree case in Lemma~\ref{lem:bound_deg}. This has been done before e.g.
in distance oracles \cite{elkinPettie15,agarwalGH11}.
\begin{proof}[Proof of Lemma~\ref{lem:sparse}]
    Let $G\in\mathcal{H}_{n,m}$ be some graph and let
    $k = \max\left\{\ceil{\frac{m}{n}},3\right\}$. Let $u\in G$ be some node with more than
    $k$ incident edges. If no such node exists, we may apply
    Lemma~\ref{lem:bound_deg} directly and we are done. Otherwise we split $u$ into
    $\ceil{\deg(u)/(k-2)}$ nodes and connect these nodes with a path of
    $0$-weight edges. Denote these nodes $u^1,\ldots,u^{\ceil{\deg(u)/(k-2)}}$.
    For each edge $(u,v)$ in $G$ we assign the end-point at $u$ to a node
    $u^i$ with $\deg(u^i) < k$. This process is illustrated in
    Figure~\ref{fig:sparse_trans}.

    Let the graph resulting from performing this process for every node $u\in
    G$ be denoted by $G'$. We then have $\Delta(G') \le k$.
    Furthermore it holds that for every pair of nodes $u,v\in G$ we have
    $dist_G(u,v) = dist_{G'}(u^1,v^1)$. Consider now using the labeling scheme
    of Lemma~\ref{lem:bound_deg} on $G'$ and setting $\ell(u) = \ell(u^1)$ for each
    node $u\in G$. We note that splitting nodes in the graph results in a
    weighted graph with weights $0$ and $1$. However, one can observe that the
    labeling scheme of Theorem~\ref{thm:d_dist} actually preserves distances
    for nodes who have at least $D$ \emph{edges} on a shortest path between
    them. It thus follows that this is actually a distance labeling scheme for
    $G$. The number of nodes in $G'$ is bounded by
    \[
        \sum_{u \in G} \ceil{\frac{\deg(u)}{k-2}} \le
        \sum_{u \in G} \left ( \frac{\deg(u)}{k-2} +1 \right ) =
        \frac{2m}{k-2} + n = O\!\left(n\right) \ ,
    \]
    which means that Lemma~\ref{lem:bound_deg} gives the desired label size.
\end{proof}

\begin{figure}[htbp]
    \centering
    \includegraphics[width=.5\textwidth]{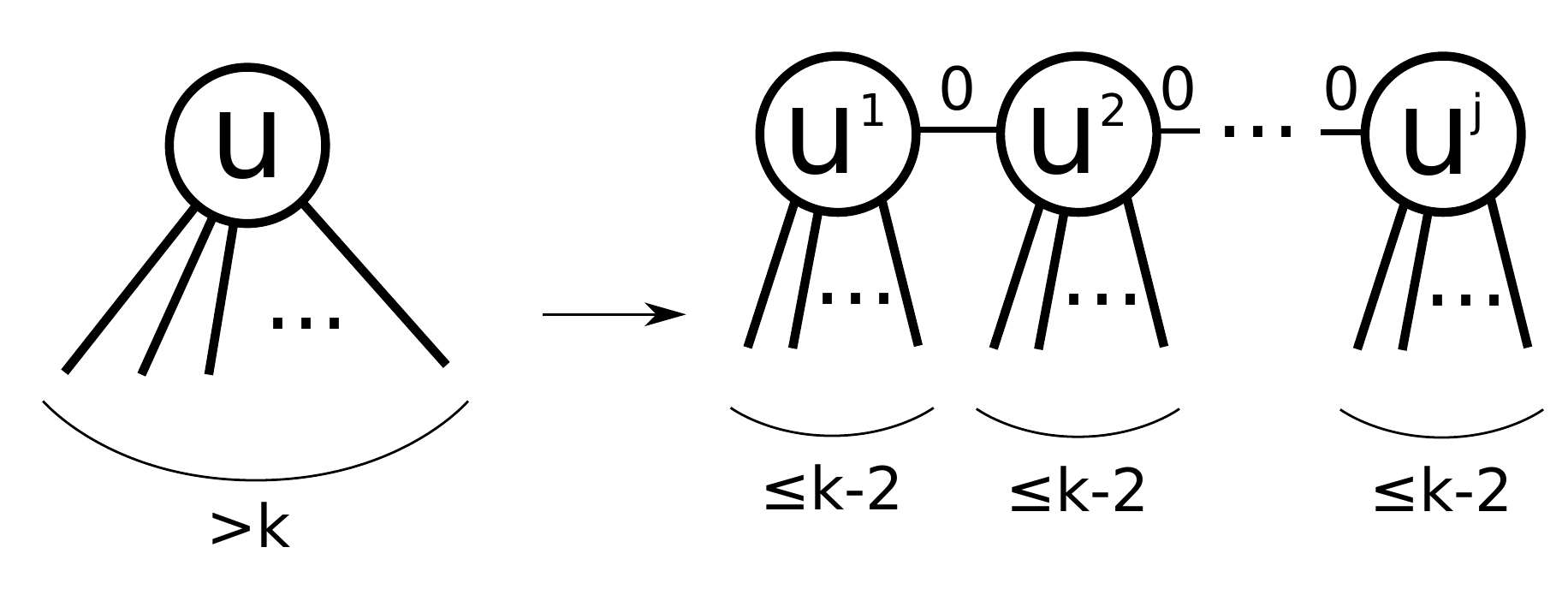}
    \caption{Illustration of the transformation from sparse graph to bounded
    degree graph.}
    \label{fig:sparse_trans}
\end{figure}

\section{Additive error}\label{sec:additive}
We will now show how we can use our $D$-preserving labeling scheme of
Theorem~\ref{thm:d_dist} to generalize the $2$-additive distance labeling
scheme of Gawrychowski et al.~\cite{GawrychowskiKUEvenSimpler}. We will
assume that $r \le n^{1/10}$ for simplicity.

Let $t = r\log^{10} n$
and let $D = \frac{r\log n}{4\log t}$. We describe the scheme in three parts:
\begin{enumerate}
    \item Let $G^r$ be a copy of $G$, where an edge is added between any pair
        of nodes whose distance is at most $r/2$ in $G$. Let $V^r_{\ge t}$ be the
        set of nodes in $G^r$ with degree at least $t$ and let $S$ be
        a minimum dominating set of $V^r_{\ge t}$ in $G_r$. Then $|S| =
        O(\frac{n\log t}{t})$.

        \medskip
        For all nodes $u\in G$ we store $dist(u,v)$ and $id(v)$ for all $v\in
        S$.
    \item Consider now the subgraph of $G$ induced by $V \setminus V^r_{\ge
        t}$. For a node $u\notin V^r_{\ge t}$, let $B_u(D)$ be the ball of
        radius $D$ around $u$ in this induced subgraph Then $|B_u(D)|\le
        t^{2D/r} = O(\sqrt{n})$. This follows from the definition of $V^r_{\ge
        t}$: There are at most $t$ nodes within distance $r/2$ from $u$ and thus
        at most $t^2$ nodes within distance $r$ from $u$, etc.

        \medskip
        For all $u\notin V^r_{\ge t}$ we store $dist(u,v)$ and $id(v)$ for all
        $v\in B_u(D)$.
    \item Finally we store a $D$-preserving distance label for all
        $u\in G$.
\end{enumerate}

The total label size is then
\[
    O\left(
        n\cdot\frac{\log n\log t}{t}
        + \sqrt{n}\log n
        + \frac{n\log t}{r\log n}\cdot (\log(r\log n))^2
    \right)
    =
    O\left(
        \frac{n}{r\log n}\cdot\polylog(\log n \cdot r)
    \right)\ ,
\]
as stated in Theorem~\ref{thm:r_approx}.

\paragraph{Decoding}
To see that the distance between two nodes $u$ and $v$ can be calculated within
an additive error $r$ we split into several cases:
\begin{itemize}
    \item If $dist(u,v) \ge D$ we can report the exact distance between $u$ and
        $v$ using the $D$-preserving distance scheme.
    \item If $dist(u,v) \le D$ and $deg_{G_r}(v) \ge t$ we can find a node
        $z\in S$ such that $dist(z,v) \le r/2$ and thus
        \[
            dist(u,z) + dist(z,v) \le dist(u,v) + dist(v,z) + dist(z,v) \le
            dist(u,v) + r\ ,
        \]
        and symmetrically if $deg_{G_r}(u)\ge t$.
    \item Finally, if $dist(u,v)\le D$ and $deg_{G_r}(u) < t$ and $deg_{G_r}(v)
        < t$, then we $v\in B_u(D)$ and we can thus report the exact distance
        between $u$ and $v$.
\end{itemize}

\paragraph{Acknowledgements}
We would like to thank Noy Rotbart for helpful discussions and observations.


\bibliographystyle{plain}
\bibliography{distlabel}

\end{document}